\theoremstyle{plain}
\newtheorem{theorem}{Theorem}[section]
\newtheorem{lemma}[theorem]{Lemma}
\theoremstyle{definition}
\newtheorem{remark}[theorem]{Remark}
\begin{document}

\numberwithin{equation}{section}

\baselineskip=15.5pt

\title[Quot scheme and deformation quantization]{Quot scheme and deformation quantization}

\author[I. Biswas]{Indranil Biswas}

\address{Department of Mathematics, Shiv Nadar University, NH91, Tehsil Dadri,
Greater Noida, Uttar Pradesh 201314, India}

\email{indranil.biswas@snu.edu.in, indranil29@gmail.com}

\subjclass[2010]{53D55, 81S10, 14H60, 14D21}

\keywords{Deformation quantization, projective structure, quot scheme, Moyal--Weyl deformation quantization.}

\date{}

\begin{abstract}
Let $X$ be a compact connected Riemann surface, and let ${\mathcal Q}(r,d)$ denote the quot scheme parametrizing
the torsion quotients of ${\mathcal O}^{\oplus r}_X$ of degree $d$. Given a projective structure $P$ on
$X$, we show that the cotangent bundle $T^*{\mathcal U}$ of a certain nonempty Zariski open subset
${\mathcal U}\, \subset\, {\mathcal Q}(r,d)$, equipped with the natural Liouville symplectic form, admits
a canonical deformation quantization. When $r\,=\,1\,=\, d$, then ${\mathcal Q}(r,d)\,=\, X$; this case was
addressed earlier in \cite{BB}.
\end{abstract}

\maketitle

\section{Introduction}

Let $Y$ be a complex manifold and $\omega$ a holomorphic symplectic form on $Y$. Then $\omega$
defines a Poisson structure on $Y$. Let ${\mathcal A}_0(Y)$ denote the sheaf of locally
defined holomorphic functions on $Y$, and define ${\mathcal A}(Y)\,:=\, {\mathcal A}_0(Y)[[h]]$,
where $h$ is a formal parameter. A deformation quantization of the Poisson structure associated
to $\omega$ is an associative algebra operation $\star$ on ${\mathcal A}(Y)$
for which the following conditions hold:

For any $\widetilde{f}\,:=\, \sum_{i=0}^\infty f_i h^i$ and $\widetilde{g}\, :=\,
\sum_{i=0}^\infty g_ih^i\, \in\, {\mathcal A}(Y)$,
the product
$$
\widetilde{f}\star \widetilde{g} \, =\, \sum_{i=0}^\infty \psi_i h^i 
$$
satisfies the following four conditions:
\begin{itemize}
\item{} each $\psi_i\, \in\, {\mathcal A}_0(Y)$
is some polynomial (independent of $\widetilde f$ and $\widetilde g$)
in derivatives (of arbitrary order) of $\{f_i\}_{i\geq 0}$ and
$\{g_i\}_{i\geq 0}$,
\item{} $\psi_0\, =\, f_0g_0$,
\item{} $1\star f_0\, =\, f_0\star 1\, =\,f_0$ for every $f_0\,\in\,
{\mathcal A}_0(Y)$, and
\item{} $\widetilde{f}\star \widetilde{g}\, -\, \widetilde{g}\star\widetilde{f} \, =\,
\sqrt{-1}\{f_0,\, g_0\}h \,
+\, \beta h^2$, where $\beta\,\in\, {\mathcal A}(Y)$ depends on $f,\,g$.
\end{itemize}
(See Section \ref{sed1} and Section \ref{sed2} for more details and references.)

Let $X$ be a compact connected Riemann surface. Fix positive integers $r$ and $d$. Let
$$
{\mathcal Q}\,:=\, {\mathcal Q}(r,d)
$$
be the quot scheme that parametrizes all torsion quotients of ${\mathcal O}^{\oplus r}_X$ of degree $d$.
It is an irreducible smooth complex projective variety of dimension $rd$. For any Zariski open subset
${\mathcal U}'\, \subset\, T^*{\mathcal Q}$, consider the Liouville form $\theta_{{\mathcal U}'}$ on it
obtained by restricting the Liouville symplectic form on $T^*{\mathcal Q}$; it is a holomorphic symplectic form
on ${\mathcal U}'$.

We show that ${\mathcal U}\, \subset\, T^*{\mathcal Q}$ equipped with the holomorphic symplectic form
$\theta_{\mathcal U}$, where ${\mathcal U}\, \subset\, T^*{\mathcal Q}$ is a certain Zariski open subset,
has a natural deformation quantization, once a projective structure on $X$ is fixed; see Theorem
\ref{thm1}.

We recall that a projective structure on $X$ is given by a covering of $X$ by holomorphic coordinate
charts such that all the transition functions are M\"obius transformations. Using the uniformization
theorem it is easy to see that every Riemann surface admits a projective structure.

When $r\,=\,1\,=\,d$, then we have ${\mathcal Q}(r,d)\,=\, X$. This isomorphism sends any torsion
sheaf to its support. Let ${\mathcal U}_X \,=\, T^*X \setminus 0_X$ be the complement of the image
of the zero section in the total space $T^*X$ of the holomorphic cotangent bundle of $X$. Fix
a projective structure on $X$. In \cite{BB} it was shown that ${\mathcal U}_X$ equipped with the Liouville
symplectic form admits a natural deformation quantization.

\section{A quot scheme}\label{se1}

Let $X$ be an irreducible smooth complex projective curve of genus $g$. Let $V$ be a holomorphic vector
bundle on $X$ of rank $r$. Fix a positive integer $d$. Consider all torsion quotients of degree $d$
of the coherent sheaf $V$. So such a quotient $T$ fits in a short exact sequence
\begin{equation}\label{e0}
0\, \longrightarrow\, W \,:=\, \text{kernel}(q_{_T})\, \longrightarrow\, V \,
\stackrel{q_{_T}}{\longrightarrow}\, T \, \longrightarrow\, 0\, ,
\end{equation}
where $q_{_T}$ is the quotient map and $W$ is a holomorphic vector bundle on $X$ with
${\rm rank}(W)\,=\, r$ and $\text{degree}(W)\,=\, \text{degree}(V)-d$.

There is a projective scheme ${\mathcal Q}\,=\, {\mathcal Q}(V, d)$ defined over $\mathbb C$ that parametrizes
all such quotients. There is also a universal quotient sheaf on $X\times {\mathcal Q}$
\begin{equation}\label{e1}
p^*_{_X} V\, \longrightarrow\, {\mathcal T}\,\longrightarrow\, 0\, ,
\end{equation}
where $p_{_X}$ is the natural projection of $X\times {\mathcal Q}$ to $X$ \cite{Gr}, \cite{BGL},
\cite{Bi}. More
precisely, the pair $({\mathcal Q},\, {\mathcal T})$ has the following universal
property: Let $S$ be a scheme over $\mathbb C$, and let
$p^*_1 V \, \longrightarrow\, T_S\, \longrightarrow\, X\times S$ be a quotient 
of $p^*_1 V$, where $p_1$ is the natural projection of $X\times S$ to $X$, such that
\begin{itemize}
\item{} $T_S$ is flat over $S$, and

\item the relative degree of $T_S$ for the projection $p_{_X}$ is $d$.
\end{itemize}
Then there is a unique morphism $\chi_{_S}\, :\, S\, \longrightarrow\, {\mathcal Q}$ such that
the quotient $T_S$ of $p^*_1 V$ coincides with the quotient $(\text{Id}_X\times \chi_{_S})^*{\mathcal T}$ (see
\eqref{e1}) of $(\text{Id}_X\times\chi_{_S})^*p^*_{_X} V\,=\, p^*_1 V$.

Take a point $\underline{z}\, \in\, \mathcal Q$; suppose it corresponds to the quotient $V\, 
\longrightarrow\, T$ (see \eqref{e0}). Then the obstruction to the smoothness of $\mathcal Q$ at 
$\underline{z}$ is given by $H^1(X,\, \text{Hom}(W,\, T))$ (see \eqref{e0} for $W$); see \cite{Gr}, 
\cite{BGL}. Since $T$ is a torsion sheaf, we have $H^1(X,\, \text{Hom}(W,\, T))\,=\, 0$. Consequently,
$\mathcal Q$ is a smooth complex projective variety. It's irreducible and its dimension is $rd$.

Let ${\mathcal V}$ denote the kernel of the quotient homomorphism in \eqref{e1}. This
$\mathcal V$ is an algebraic vector bundle on $X\times \mathcal Q$. For any $q\, \in\,
\mathcal Q$, the restriction ${\mathcal V}\vert_{X\times \{q\}}$ is a holomorphic vector
bundle on $X$ of rank $r$ and degree $\text{degree}(V)-d\,=:\,\delta$. Hence $\mathcal V$
is an algebraic family of vector bundles on $X$, parametrized by $\mathcal Q$, of rank $r$
and degree $\delta$. Therefore,
we have a morphism
\begin{equation}\label{e2}
\varphi_V\, :\, {\mathcal Q}\, \longrightarrow\,{\mathcal M}_X(r, \delta)\, ,
\end{equation}
where ${\mathcal M}_X(r, \delta)$ is the moduli stack of vector bundles on $X$ of rank $r$
and degree $\delta$.

Take a point $x_0\, \in\, X$. The inclusion of the sheaf ${\mathcal O}_X$ in
${\mathcal O}_X(x_0)$ produces an inclusion
$V\, \hookrightarrow\, V^1\, :=\, V\otimes {\mathcal O}_X(x_0)$. If $T$ is a quotient
sheaf of $V$, then there is a unique quotient $T^1$ of $V^1$ such that the following
diagram is commutative:
\begin{equation}\label{e4}
\begin{matrix}
0 & \longrightarrow & V & \longrightarrow & V^1 & \longrightarrow & V^1_{x_0}
& \longrightarrow & 0\\
&& \Big\downarrow && \Big\downarrow && \Vert\\
0 & \longrightarrow & T & \longrightarrow & T^1 & \longrightarrow & V^1_{x_0}
& \longrightarrow & 0
\end{matrix}
\end{equation}
where $V^1_{x_0}$ is the fiber of $V^1$ over $x_0$, and the two rows in \eqref{e4} are exact. In fact,
$T^1\,=\, V^1/\text{ker}_T$, where $\text{ker}_T$ is the kernel of the quotient
homomorphism $V\, \longrightarrow\, T$; for $T$ in \eqref{e0}, we have $\text{ker}_T\,=\, W$.
Sending any quotient $T$ of $V$ to the corresponding quotient $T^1$ of $V^1$, we get an
embedding
\begin{equation}\label{e3}
f_V\, :\, {\mathcal Q}(V, d)\,\longrightarrow\, {\mathcal Q}(V^1, d+r)\, .
\end{equation}
Note that the following diagram of maps is commutative:
$$
\begin{matrix}
{\mathcal Q}(V, d)& \stackrel{\varphi_V}{\longrightarrow} & {\mathcal M}_X(r, \delta)\\
\,~\,\Big\downarrow f_V && ~\Big\downarrow{\rm Id}\\
{\mathcal Q}(V^1, d+r)& \stackrel{\varphi_{V^1}}{\longrightarrow} & {\mathcal M}_X(r, \delta)
\end{matrix}
$$
where $f_V$ is constructed in \eqref{e3}, $\varphi_V$ is constructed in \eqref{e2}, and
$\varphi_{V^1}$ is constructed as in \eqref{e2} for the vector bundle $V^1$.

For each positive integer $n$, define $V^n\,:=\, V\otimes {\mathcal O}_X(nx_0)$. 
Replacing $V$ by $V^n$ in \eqref{e3}, we have an embedding
$$
{\mathcal Q}(V^n, d+nr)\,\hookrightarrow\, {\mathcal Q}(V^{n+1}, d+(n+1)r)\, .
$$
Let
$$
\widetilde{\mathcal Q}\,:=\, \varinjlim_{n} {\mathcal Q}(V^n, d+nr)
$$
be the direct limit constructed using the above embeddings; it is an ind--scheme. In view
of \eqref{e4}, from \eqref{e2} we have a morphism 
$$
\varphi\, :\, \widetilde{\mathcal Q}\, \longrightarrow\,{\mathcal M}_X(r, \delta)\, .
$$
This morphism $\varphi$ is surjective.

\section{A canonical deformation quantization}

\subsection{Deformation quantization}\label{sed1}

Let $Y$ be a connected complex manifold equipped with a holomorphic symplectic form $\omega$.
So $\omega$ is a closed holomorphic two--form on $Y$ such that for every $y\,\in\, Y$, the
$\mathbb C$--linear map
$$
T_y Y\, \longrightarrow\, \Omega_y\,=\, (T_yY)^*\, , \ \ v\, \longmapsto\,
\{v'\,\longmapsto\, \omega(y)(v\, ,v')\,=\, i_v\omega(y)(v')\}
$$
is an isomorphism. Let
\begin{equation}\label{e5}
P\, :\, TY\, \longrightarrow\, \Omega_Y
\end{equation}
be the isomorphism constructed above. Note that since $\omega$ is holomorphic, the
condition that $\omega$ is $d$--closed is equivalent to the condition that it is
$\partial$--closed.

The symplectic from $\omega$ defines a Poisson structure, which can be described as follows.
Let $f_1$ and $f_2$ be two holomorphic functions defined on an open subset $U$ of $Y$.
Then their Poisson bracket $\{f_1,\, f_2\}$ is defined to be
$$
\{f_1,\, f_2\}\,=\, \omega (P^{-1}(df_1),\, P^{-1}(df_2))\, ,
$$
where $P$ is the isomorphism in \eqref{e5}; note that $\{f_1,\, f_2\}$ is a holomorphic
function on $U$.

Let ${\mathcal A}_0(Y)$ denote the sheaf of locally defined holomorphic functions on $Y$.
Note that the stalks of ${\mathcal A}_0(Y)$ are algebras. Let
$$
{\mathcal A}(Y)\,:=\, {\mathcal A}_0(Y)[[h]]
$$
be the space of all formal Taylor series of the form $\sum_{i=0}^\infty f_i h^i$, where
$f_i\, \in\, {\mathcal A}_0(Y)$ and $h$ is the formal parameter.

A \textit{deformation quantization} of the above Poisson structure is an associative algebra
operation on ${\mathcal A}(Y)$, which is denoted by $\star$,
for which the following conditions hold (see
\cite{BFFLS}, \cite{DL}, \cite{We} for the details):

For any $\widetilde{f}\,:=\, \sum_{i=0}^\infty f_i h^i$ and $\widetilde{g}\, :=\,
\sum_{i=0}^\infty g_ih^i\, \in\, {\mathcal A}(Y)$,
the product
$$
\widetilde{f}\star \widetilde{g} \, =\, \sum_{i=0}^\infty \psi_i h^i 
$$
satisfies the following four conditions:
\begin{itemize}
\item{} each $\psi_i\, \in\, {\mathcal A}_0(Y)$
is some polynomial (independent of $\widetilde f$ and $\widetilde g$)
in derivatives (of arbitrary order) of $\{f_i\}_{i\geq 0}$ and
$\{g_i\}_{i\geq 0}$,
\item{} $\psi_0\, =\, f_0g_0$,
\item{} $1\star f_0\, =\, f_0\star 1\, =\,f_0$ for every $f_0\,\in\,
{\mathcal A}_0(Y)$, and
\item{} $\widetilde{f}\star \widetilde{g}\, -\, \widetilde{g}\star\widetilde{f} \, =\,
\sqrt{-1}\{f_0,\, g_0\}h \,
+\, \beta h^2$, where $\beta\,\in\, {\mathcal A}(Y)$ depends on $f,\,g$.
\end{itemize}

\subsection{Moyal--Weyl deformation quantization of a symplectic vector space}\label{sed2}

Fix a complex vector space $V$ of complex dimension $2d$. Fix
a constant symplectic form $\theta_0$ on $V$. Therefore, $\theta_0\,\in \, \bigwedge^2V^*$,
and the element $\bigwedge^d \theta_0$ of $\bigwedge^{2d}V^*$ is nonzero. As before, let
${\mathcal A}_0(V)$ denote the sheaf of all locally defined holomorphic
functions on $V$. It is equipped with the holomorphic Poisson structure given by $\theta_0$.

Let
$$
\Delta\, :\, V\, \longrightarrow\, V\times V
$$
denote the diagonal homomorphism defined by $v\, \longmapsto
\,(v,\, v)$. There exists a unique differential operator
\begin{equation}\label{diff-op}
D\, : \, {\mathcal A}_0(V \times V) \, \longrightarrow \,
{\mathcal A}_0(V \times V)
\end{equation}
with constant coefficients such that for any pair $f_1\, ,f_2\, \in\,
{\mathcal A}_0(V)$, we have
$$
\{f_1,\, f_2 \} \, =\, \Delta^* D(f_1 \otimes f_2)\, ,
$$
where $f_1 \otimes f_2$ is the function on $V\times V$ defined
by $(u,\, v) \, \longmapsto\, f_1(u) f_2(v)$ \cite{We}
and \cite{Fe}.

The \textit{Moyal--Weyl algebra} is defined by
$$
f_1\star f_2\, =\, \Delta^*\exp (\sqrt{-1}hD/2)(f_1\otimes f_2)
\,\in\, {\mathcal A}(V)
$$
for $f\, ,g\,\in\, {\mathcal A}_0(V)$, and
it is extended to a multiplication operation
on ${\mathcal A}(V)$ using the
bilinearity condition with respect to $h$.
In other words, if
$\widetilde{f}\, :=\, \sum_{i=0}^\infty f_ih^i$
and $\widetilde{g}\, :=\,\sum_{i=0}^\infty g_i h^i$ are two elements of
${\mathcal A}(V)$, then
$$
\widetilde{f}\star\widetilde{g}\, =\, \sum_{i,j=0}^\infty h^{i+j} (f_i\star g_j) \,\in\,
{\mathcal A}(V)\, .
$$
It is known that this $\star$ operation makes ${\mathcal A}(V)$ into an
associative algebra that quantizes the symplectic structure $\Theta_0$; see 
\cite{BFFLS} and \cite{We} for the details.

Let $\text{Sp}(V)$ denote the group of all linear automorphisms
of $V$ that preserve the symplectic form $\theta_0$. This group
$\text{Sp}(V)$ acts on ${\mathcal A}(V)$ in an obvious way.
More precisely,
$$
(\sum_{i=0}^\infty f_ih^i)\circ G \, =\,
\sum_{i=0}^\infty (f_i\circ G)h^i\, ,
$$
where $G\, \in\, \text{Sp}(V)$. 
The differential operator $D$ in \eqref{diff-op} evidently
commutes with the diagonal action of $\text{Sp}(V)$
on $V\times V$. This immediately implies that
\begin{equation}\label{symp.-identity}
(\widetilde{f}\circ G)\star (\widetilde{g}\circ G) \, =\, (\widetilde{f}\star\widetilde{g})\circ G\, . 
\end{equation}
for any $G\, \in\, \text{Sp}(V)$ and $\widetilde{f}\, ,\widetilde{g}\,\in\,
{\mathcal A}(V)$.

\section{Projective structures on a Riemann surface}

Let $X$ be a compact connected Riemann surface.

A global holomorphic automorphism of ${\mathbb C}{\mathbb P}^1$ is called
a \textit{M\"obius transformation}. All M\"obius transformations are of the
form $z\, \longmapsto\, \frac{az+b}{cz+d}$ with $ad-bc \,\not=\, 0$. So the group
of M\"obius transformations is identified with the projective
linear group $\text{PGL}(2,{\mathbb C})$.

A holomorphic coordinate function on $X$ is a pair of the form $(U,\, \phi)$, where $U$ is
an Euclidean open subset of $X$, and $\phi\, :\, U\,\longrightarrow\, {\mathbb C}{\mathbb P}^1$
is a biholomorphism onto an open subset of ${\mathbb C}{\mathbb P}^1$.
A collection of holomorphic coordinate functions $\{U_j,\, \phi_j\}_{j\in J}$ is said to cover
$X$ if $\bigcup_{j\in J} U_j\,=\, X$. A projective structure on $X$ is given by a covering of $X$ by a
collection of holomorphic coordinate functions $\{U_j,\, \phi_j\}_{j\in J}$ such that the
transition function $\phi_i\circ \phi^{-1}_j$, when restricted to $\phi_j(U)$, where
$U$ is any connected component of $U_i\bigcap U_j$, coincides with the restriction of
some M\"obius transformation for all $i,\, j\, \in\, J$. (See \cite{Gu}, \cite{De}.)

Any Riemann surface $Z$ admits a projective structure. To see this, note that the uniformization theorem 
says that the universal cover of $Z$ is either $\mathbb C$ or ${\mathbb C}{\mathbb P}^1$ or the 
upper-half plane $\mathbb H$. So the automorphism group of the universal cover of $Z$ is a subgroup of 
$\text{PGL}(2,{\mathbb C})$. Therefore, the natural projective structure on the universal cover of $Z$
given by its inclusion map to ${\mathbb C}{\mathbb P}^1$ preserved by the deck transformations and hence
it descends to a projective structure on $Z$.

The space of all projective structure on
a Riemann surface $Z$ is an affine space for $H^0(Z,\, K^{\otimes 2}_Z)$, where $K_Z\,=\, \Omega_Z$
is the holomorphic cotangent bundle of $Z$.

Let $s_X\, :\, X\, \longrightarrow\, K_X$ be the zero section
The total space of $K_X$ is equipped with the Liouville symplectic form. Define
\begin{equation}\label{e6}
{\mathcal K}_X\,=\, K_X \setminus s_X(X)
\end{equation}
to be the complement of the image of the zero section. Let $\theta_X$ denote the restriction of the 
Liouville symplectic form to ${\mathcal K}_X$. We will show that a projective structure on $X$ produces a 
deformation quantization of the symplectic manifold $({\mathcal K}_X,\, \theta_X)$.

Fix the standard symplectic form $dx\wedge dy$ on ${\mathbb C}^2$, where $(x,\, y)$ are the
standard coordinates. The group ${\mathbb Z}/2\mathbb Z$ acts on ${\mathbb C}^2$ via the involution
$(x,\,y)\, \longmapsto\, (-x,\,-y)$. Let
\begin{equation}\label{y0}
{\mathcal Y}_0\, :=\, ({\mathbb C}^2\setminus \{0\})/({\mathbb Z}/2\mathbb Z)
\end{equation}
be the corresponding quotient. Since the standard symplectic form on ${\mathbb C}^2$ is preserved by the 
action of ${\mathbb Z}/2\mathbb Z$, the complex surface ${\mathcal Y}_0$ gets a holomorphic symplectic 
form, which will be denoted by
\begin{equation}\label{y1}
\theta'.
\end{equation}
{}From \eqref{symp.-identity} it follows immediately that the 
Moyal--Weyl deformation quantization of $({\mathbb C}^2,\, dx\wedge dy)$ produces a deformation 
quantization of the symplectic manifold $({\mathcal Y}_0,\, \theta')$.

Construct ${\mathcal K}_{{\mathbb C}{\mathbb P}^1}$ by setting $X\,=\, {\mathbb C}{\mathbb P}^1$
in \eqref{e6}. The Liouville symplectic form $\theta_{{\mathbb C}{\mathbb P}^1}$ on it
will be denoted by
\begin{equation}\label{y2}
\theta''.
\end{equation}

The standard action of $\text{SL}(2,{\mathbb C})$ on ${\mathbb C}^2$ produces an action of
$\text{SL}(2,{\mathbb C})$ on ${\mathcal Y}_0$. The center $\pm I$ of $\text{SL}(2,{\mathbb C})$ acts
trivially on ${\mathcal Y}_0$, and hence we get an action of $\text{PGL}(2,{\mathbb C})
\,=\, \text{SL}(2,{\mathbb C})/\{\pm I\}$ on
${\mathcal Y}_0$. On the other hand, the action of $\text{PGL}(2,{\mathbb C})$ on
${\mathbb C}{\mathbb P}^1$ produces an action of $\text{PGL}(2,{\mathbb C})$ on
${\mathcal K}_{{\mathbb C}{\mathbb P}^1}$.

\begin{lemma}\label{lem1}
The symplectic surface $({\mathcal Y}_0,\,\theta')$ (see \eqref{y0} and \eqref{y1}) has a natural
holomorphic symplectomorphism with $({\mathcal K}_{{\mathbb C}{\mathbb P}^1},\, \theta'')$ (see
\eqref{y2}). This symplectomorphism is ${\rm PGL}(2,{\mathbb C})$--equivariant.
\end{lemma}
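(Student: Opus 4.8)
The plan is to construct the symplectomorphism explicitly and then verify equivariance by a direct comparison of the two $\mathrm{PGL}(2,\mathbb{C})$-actions. First I would recall that $\mathbb{CP}^1$ is covered by the two standard affine charts $U_0\,=\,\{[z_0:z_1]\,:\,z_0\neq 0\}$ and $U_\infty\,=\,\{[z_0:z_1]\,:\,z_1\neq 0\}$, with coordinates $u\,=\,z_1/z_0$ on $U_0$ and $w\,=\,z_0/z_1$ on $U_\infty$. On $K_{\mathbb{CP}^1}$ over $U_0$, a point is a pair $(u,\,p\,du)$ with $p\,\in\,\mathbb{C}$, and the Liouville form is $\theta''\,=\,d(p\,du)\,=\,dp\wedge du$ there; removing the zero section $\mathcal{K}_{\mathbb{CP}^1}$ amounts to imposing $p\neq 0$. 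The key observation is the classical identification of $K_{\mathbb{CP}^1}$ minus the zero section with the quotient $(\mathbb{C}^2\setminus\{0\})/(\mathbb{Z}/2\mathbb{Z})$: the map sends $(x,\,y)\,\in\,\mathbb{C}^2\setminus\{0\}$ to the cotangent covector at the point $[x:y]\,\in\,\mathbb{CP}^1$ built from $(x,y)$ in a degree-$2$ homogeneous way, which is well defined modulo the sign involution $(x,y)\mapsto(-x,-y)$ and never vanishes. Concretely, on the locus $x\neq 0$ one sets $u\,=\,y/x$ and assigns the covector $-x^2\,du$ (equivalently $p\,=\,-x^2$), and one checks this patches with the analogous formula $q\,=\,-y^2$ on $y\neq 0$ since $dw\,=\,-u^{-2}\,du$ forces the covectors to agree.

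Next I would verify that this map $\Phi\,:\,\mathcal{Y}_0\,\longrightarrow\,\mathcal{K}_{\mathbb{CP}^1}$ is a biholomorphism and pulls $\theta''$ back to $\theta'$. Biholomorphicity is checked chart by chart: on $x\neq 0$ the inverse sends $(u,\,p\,du)$ with $p\neq 0$ to the $\mathbb{Z}/2$-orbit of $(\sqrt{-p},\,u\sqrt{-p})\,\in\,\mathbb{C}^2\setminus\{0\}$, the two square roots being exactly the two points of the orbit, so $\Phi$ is bijective with holomorphic inverse away from the overlap, and the overlaps are handled by the transition computation above. For the symplectic form: on $x\neq 0$ we have $\Phi^*(dp\wedge du)\,=\,d(-x^2)\wedge d(y/x)\,=\,(-2x\,dx)\wedge\big(\tfrac{dy}{x}-\tfrac{y}{x^2}\,dx\big)\,=\,-2\,dx\wedge dy$, so $\Phi^*\theta''\,=\,-2\,(dx\wedge dy)$; absorbing the harmless constant $-2$ (or rescaling one of the coordinates, which is a symplectomorphism onto the rescaled form, and noting $\theta'$ was only defined up to such normalization — alternatively one rescales $\Phi$ at the outset by inserting a factor so the constant becomes $1$) yields $\Phi^*\theta''\,=\,\theta'$. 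I would state the rescaling convention once at the start to keep this clean.

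Finally, for the $\mathrm{PGL}(2,\mathbb{C})$-equivariance, I would trace an element $g\,\in\,\mathrm{SL}(2,\mathbb{C})$ through both sides. On $\mathcal{Y}_0$, $g$ acts by the linear action on $\mathbb{C}^2$ descended to the quotient; on $\mathbb{CP}^1$, $g$ acts by the induced Möbius transformation, and on $K_{\mathbb{CP}^1}$ by the cotangent lift of that action, which automatically preserves the Liouville form. The claim $\Phi\circ g\,=\,g\circ\Phi$ reduces to checking that the degree-$2$ homogeneous covector attached to $(x,y)$ transforms under the linear $g$-action exactly the way the cotangent lift dictates — this is the statement that the assignment $(x,y)\mapsto(\text{point }[x:y],\,\text{covector from }x^2)$ is $\mathrm{SL}(2,\mathbb{C})$-natural, which holds because $x^2$ (more precisely the section of $\mathcal{O}(-2)\,=\,K_{\mathbb{CP}^1}$ it represents) is built functorially from the tautological data. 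Concretely one writes $g\,=\,\begin{pmatrix}a&b\\c&d\end{pmatrix}$, computes that the new first coordinate is $ax+by\,=\,x(a+bu)$, hence the new $p$ is $-x^2(a+bu)^2$, and checks this equals the cotangent-lifted value of $p$ under the Möbius map $u\mapsto\frac{au+b}{cu+d}$ — wait, one must be careful with which convention (left vs.\ right action, and whether the Möbius map uses $\frac{az+b}{cz+d}$ or its inverse) matches, and I expect \emph{this bookkeeping of conventions to be the main obstacle}: the calculation itself is a one-line chain rule, but getting the action on $\mathbb{C}^2$ versus the action on $\mathbb{CP}^1$ versus the cotangent lift all to compose in the same direction (so that the center $\pm I$ acts trivially and one genuinely descends to $\mathrm{PGL}(2,\mathbb{C})$ on both sides compatibly) requires fixing conventions at the outset and applying them consistently. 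Once the conventions are pinned down, equivariance on one chart propagates to all of $\mathcal{Y}_0$ by the density of that chart and continuity, completing the proof.
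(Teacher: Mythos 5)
Your route is genuinely different from the paper's. The paper never writes coordinates: it identifies $\mathbb{C}^2\setminus\{0\}$ with $\mathcal{O}_{\mathbb{C}\mathbb{P}^1}(-1)$ minus its zero section, maps onto $\mathcal{O}_{\mathbb{C}\mathbb{P}^1}(-2)$ minus its zero section via $v\longmapsto v\otimes v$ (which is exactly what realizes the quotient by $\pm 1$), and then invokes the canonical isomorphism $K_{\mathbb{C}\mathbb{P}^1}\cong\mathcal{O}_{\mathbb{C}\mathbb{P}^1}(-2)\otimes\bigl(\bigwedge^2\mathbb{C}^2\bigr)^*$, trivializing the second factor by $dx\wedge dy$. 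In that formulation the $\mathrm{PGL}(2,\mathbb{C})$--equivariance is automatic, since every ingredient is natural under $\mathrm{GL}(2,\mathbb{C})$ and the trivializing $2$--form is $\mathrm{SL}(2,\mathbb{C})$--invariant, while the matching of the symplectic forms is asserted rather than computed. Your chart-by-chart computation buys an explicit verification that the forms correspond (up to the constant you flag, which is genuinely present and is correctly disposed of by precomposing with a scalar map $(x,y)\mapsto(\lambda x,\lambda y)$; such a scalar commutes with the linear action, so equivariance is not disturbed), at the price of having to check equivariance by hand.

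Two points need repair or completion. First, your two chart formulas are inconsistent as stated: with $u=y/x$, $w=x/y$ and $dw=-u^{-2}\,du$, the covector $p\,du=-x^2\,du$ equals $q\,dw$ with $q=+y^2$, not $-y^2$ (equivalently, the invariantly defined covector is $x^2\,du$ in one chart and $-y^2\,dw$ in the other; there is no sign-symmetric choice). This is precisely the bookkeeping you warned about, and one sign must be flipped for the patching claim to be true. Second, you leave the equivariance verification hanging on ``which convention matches''; it does close up with your conventions: for $g=\begin{pmatrix}a&b\\c&d\end{pmatrix}\in\mathrm{SL}(2,\mathbb{C})$ acting by $(x,y)\mapsto(ax+by,\,cx+dy)$, the induced map on the chart is $\sigma(u)=\frac{c+du}{a+bu}$ with $\sigma'(u)=(a+bu)^{-2}$, the cotangent lift sends $p$ to $p/\sigma'(u)=p\,(a+bu)^2$, and indeed $-x^2(a+bu)^2=-(ax+by)^2$ is the value of your $p$ at the image point; since $\pm I$ give the same answer, both actions descend compatibly to $\mathrm{PGL}(2,\mathbb{C})$. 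With these two repairs your argument is complete and is a legitimate, more computational alternative to the paper's invariant-theoretic proof.
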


\begin{proof}
Consider the tautological line bundle ${\mathcal O}_{{\mathbb C}{\mathbb P}^1}(-1)$ on
${\mathbb C}{\mathbb P}^1$ whose fiber over any point $p\, \in\, {\mathbb C}{\mathbb P}^1$
is the line in ${\mathbb C}^2$ represented by $p$. Let 
$$s_0\, :\, {\mathbb C}{\mathbb P}^1\, \longrightarrow\, {\mathcal O}_{{\mathbb C}{\mathbb P}^1}(-1)$$
be the zero section. The complement ${\mathbb C}^2\setminus\{0\}$ is identified with the complement
${\mathcal O}_{{\mathbb C}{\mathbb P}^1}(-1)\setminus s_0({\mathbb C}{\mathbb P}^1)$ by sending any
$y\, \in\, {\mathbb C}^2\setminus\{0\}$ to the point in the line ${\mathbb c}\cdot y\, \subset\,
{\mathbb C}^2$ given by $y$.

Let 
$$s'_0\, :\, {\mathbb C}{\mathbb P}^1\, \longrightarrow\, {\mathcal O}_{{\mathbb C}{\mathbb P}^1}(-2)$$
be the zero section. Consider the map of total spaces
$${\mathcal O}_{{\mathbb C}{\mathbb P}^1}(-1)\, \longrightarrow\,
{\mathcal O}_{{\mathbb C}{\mathbb P}^1}(-2)$$ defined by $v\, \longmapsto\, v\otimes v$; note that this
is not fiberwise linear. Using this map, 
the quotient ${\mathcal Y}_0$ in \eqref{y0} is identified with the 
the complement ${\mathcal O}_{{\mathbb C}{\mathbb P}^1}(-2)\setminus s'_0({\mathbb C}{\mathbb P}^1)$.

The holomorphic cotangent bundle $K_{{\mathbb C}{\mathbb P}^1}$ of ${\mathbb C}{\mathbb P}^1$ is identified
with the line bundle ${\mathcal O}_{{\mathbb C}{\mathbb P}^1}(-2)\otimes_{\mathbb C} \left(\bigwedge^2
{\mathbb C}^2\right)^*$. Therefore, using the symplectic form $dx\wedge dy$ on 
${\mathbb C}^2$, the line bundle ${\mathcal O}_{{\mathbb C}{\mathbb P}^1}(-2)$
is identified with $K_{{\mathbb C}{\mathbb P}^1}$. Combining this with the above isomorphism of
${\mathcal Y}_0$ with ${\mathcal O}_{{\mathbb C}{\mathbb P}^1}(-2)\setminus s'_0({\mathbb C}{\mathbb P}^1)$
we obtain a 
holomorphic isomorphism between ${\mathcal Y}_0$ and ${\mathcal K}_{{\mathbb C}{\mathbb P}^1}$. This
biholomorphism takes the symplectic form $\theta'$ on ${\mathcal Y}_0$ to the symplectic form
$\theta''$ on ${\mathcal K}_{{\mathbb C}{\mathbb P}^1}$. Furthermore, this biholomorphism is
${\rm PGL}(2,{\mathbb C})$--equivariant.
\end{proof}

Fix a projective structure $P$ on $X$.

It was shown earlier that the Moyal--Weyl
deformation quantization of $({\mathbb C}^2,\, dx\wedge dy)$ produces a deformation quantization of
the symplectic manifold $({\mathcal Y}_0,\, \theta')$. In view of Lemma \ref{lem1}
and \eqref{symp.-identity}, using the projective structure $P$ on $X$, this deformation quantization of
$({\mathcal Y}_0,\, \theta')$ produces a deformation quantization of 
$({\mathcal K}_X,\,\theta_X)$.

\section{Liouville form for quot scheme}

Let $X$ be a compact connected Riemann surface.

Let $V\,=\, {\mathcal O}^{\oplus r}_X$ be the trivial holomorphic vector bundle on $X$
of rank $r$. The trivial holomorphic line bundle $\bigwedge^r V\,=\, {\mathcal O}_X$ will be
denoted by $\mathcal L$. We have a holomorphic map
\begin{equation}\label{ga}
\gamma\, :\, {\mathcal Q}(V, d)\,\longrightarrow\, {\mathcal Q}({\mathcal L}, d)
\end{equation}
that sends any quotient $q_T\, :\, V\, \longrightarrow\, T$ as in \eqref{e0} to the
quotient ${\mathcal L}/(\bigwedge^r {\rm kernel}(q_{_T})) \,=\, {\mathcal L}/(\bigwedge^r W)$,
where $W$ is the vector bundle in \eqref{e0} defined by ${\rm kernel}(q_{_T})$.

Now, since ${\mathcal L}$ is a line bundle, any torsion quotient $\mathcal T$ of $\mathcal L$ is
uniquely determined by the support of the sheaf $\mathcal T$. Consequently, the quot
scheme ${\mathcal Q}({\mathcal L}, d)$ is identified with the symmetric product
$\text{Sym}^d(X)$ by the map
\begin{equation}\label{e-1}
\eta\,:\, {\mathcal Q}({\mathcal L}, d)\, \longrightarrow\, \text{Sym}^d(X)
\end{equation}
that sends any quotient $\mathcal T$ of $\mathcal L$ to the
support of the sheaf $\mathcal T$ with the scheme structure given by its $0$--th Fitting ideal.

Let $\text{Sym}^d(X)^0\, \subset\, \text{Sym}^d(X)$ be the Zariski open dense subset that
parametrizes all the distinct $d$ points of $X$. In other words, $\text{Sym}^d(X)^0\,
\subset\, \text{Sym}^d(X)$ is the locus of reduced effective divisors on $X$ of degree $d$. Let
\begin{equation}\label{e-2}
\widetilde{\mathcal U}'\, :=\, (\eta\circ\gamma)^{-1}(\text{Sym}^d(X)^0)\, \subset\,
{\mathcal Q}(V, d)
\end{equation}
be the inverse image, where $\gamma$ and $\eta$ are the maps constructed in \eqref{ga} and
\eqref{e-1} respectively, which is a Zariski open dense subset of ${\mathcal Q}(V, d)$. We note that
$$
\widetilde{\mathcal U}'\, =\, \text{Sym}^d(X)^0\times ({\mathbb C}{\mathbb P}^{r-1})^d\, .
$$
Therefore, we have the Zariski open subset
$$
\widetilde{\mathcal U}\, =\, \text{Sym}^d(X)^0\times ({\mathbb C}^{r-1})^d
\,=\, \text{Sym}^d(X)^0\times {\mathbb C}^{d(r-1)}\, \subset\,
\text{Sym}^d(X)^0\times ({\mathbb C}{\mathbb P}^{r-1})^d\, .
$$
So the holomorphic cotangent bundle $T^*\widetilde{\mathcal U}$ is the Cartesian product
\begin{equation}\label{y3}
T^*\widetilde{\mathcal U}\,=\, (T^*\text{Sym}^d(X)^0)\times (T^*{\mathbb C}^{d(r-1)})\, .
\end{equation}
The Liouville holomorphic symplectic form on the cotangent
bundle $T^*\widetilde{\mathcal U}$ will be denoted by
\begin{equation}\label{y4}
\theta_0.
\end{equation}

Note that $T^*\text{Sym}^d(X)^0$ is a Zariski open subset of $\text{Sym}^d(T^*X)$.
On the other hand, the inclusion map ${\mathcal K}_X\, \hookrightarrow\, T^*X\,=\, K_X$ in
\eqref{e6} produces a Zariski open subset
$$
\text{Sym}^d({\mathcal K}_X)\, \subset\, \text{Sym}^d(T^*X)\, .
$$
Let
\begin{equation}\label{cw}
{\mathcal W}\, :=\, (T^*\text{Sym}^d(X)^0)\cap \text{Sym}^d({\mathcal K}_X)\, \subset\,
\text{Sym}^d(T^*X)
\end{equation}
be the intersection, which is a Zariski open subset.

Let
\begin{equation}\label{cu}
{\mathcal U}\,:=\, {\mathcal W}\times T^*{\mathbb C}^{d(r-1)}
\, \subset\, (T^*\text{Sym}^d(X)^0)\times (T^*{\mathbb C}^{d(r-1)})\,=\,
T^*\widetilde{\mathcal U}
\end{equation}
be the Zariski open subset (see \eqref{y3}). The holomorphic symplectic form on $\mathcal U$ obtained by
restricting the Liouville symplectic form $\theta_0$ on $T^*\widetilde{\mathcal U}$ (see \eqref{y4})
will be denoted by $\theta$.

\begin{theorem}\label{thm1}
A projective structure $P$ on $X$ produces a deformation quantization of a Zariski open dense subset
${\mathcal U}\, \subset\, T^*{\mathcal Q}(V, d)$ equipped with the Liouville
symplectic form $\theta$.
\end{theorem}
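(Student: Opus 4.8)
The plan is to reduce the deformation quantization of $({\mathcal U},\,\theta)$ to two pieces which have already been handled: a symmetric-product piece built out of the curve $X$, and a flat affine piece coming from the $({\mathbb C}^{r-1})^d$-factor. Recall from \eqref{cu} that ${\mathcal U}\,=\,{\mathcal W}\times T^*{\mathbb C}^{d(r-1)}$, and that by \eqref{y3}--\eqref{y4} the Liouville form $\theta$ on ${\mathcal U}$ is the restriction of $\theta_0$, which is itself the sum of the Liouville form on $T^*\mathrm{Sym}^d(X)^0$ and the standard Liouville form on $T^*{\mathbb C}^{d(r-1)}$. So $\theta$ splits as $\theta_{\mathcal W}\boxplus\theta_{\mathrm{std}}$, and since a star-product on a product symplectic manifold can be assembled from star-products on the factors (the two Poisson brackets commute, so one may take the ``tensor product'' quantization $\sum h^{i+j}(f_i\star g_j)$ exactly as in Section~\ref{sed2}), it suffices to quantize $({\mathcal W},\,\theta_{\mathcal W})$ and $(T^*{\mathbb C}^{d(r-1)},\,\theta_{\mathrm{std}})$ separately and verify the four axioms of Section~\ref{sed1} for the combined product.

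The affine factor is immediate: $T^*{\mathbb C}^{d(r-1)}\,=\,{\mathbb C}^{2d(r-1)}$ with its constant standard symplectic form, so the Moyal--Weyl construction of Section~\ref{sed2} gives a canonical deformation quantization with no choices involved. For the curve factor, the key point is that ${\mathcal W}\,\subset\,\mathrm{Sym}^d(T^*X)$ is an open subset of the $d$-th symmetric product of $T^*X$, and in fact, by \eqref{cw}, ${\mathcal W}\,\subset\,\mathrm{Sym}^d({\mathcal K}_X)$, the symmetric product of the punctured cotangent bundle ${\mathcal K}_X$ of \eqref{e6}. In Section~4 it was shown that the fixed projective structure $P$ on $X$ produces, via Lemma~\ref{lem1} and the $\mathrm{PGL}(2,{\mathbb C})$-equivariant Moyal--Weyl quantization of $({\mathcal Y}_0,\,\theta')$, a deformation quantization of $({\mathcal K}_X,\,\theta_X)$. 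Taking the $d$-fold product gives an $\mathfrak{S}_d$-equivariant deformation quantization of $({\mathcal K}_X^{\,d},\,\theta_X^{\boxplus d})$ — equivariance holds because the construction on each factor is identical and the $\star$-product on a product is manifestly symmetric under permuting identical factors — and this descends to the quotient $\mathrm{Sym}^d({\mathcal K}_X)$ on the locus where the symmetric product is smooth, in particular on the open set ${\mathcal W}$ lying over $\mathrm{Sym}^d(X)^0$ (distinct points), where the $\mathfrak{S}_d$-action is free. Restricting the resulting star-product from $\mathrm{Sym}^d({\mathcal K}_X)$ to the open subset ${\mathcal W}$ gives the quantization of $({\mathcal W},\,\theta_{\mathcal W})$.

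Finally I would assemble: the quantization of ${\mathcal U}$ is the product of the quantization of ${\mathcal W}$ just obtained and the Moyal--Weyl quantization of $T^*{\mathbb C}^{d(r-1)}$, and one checks the four conditions of Section~\ref{sed1} — associativity, the polynomiality/locality of each $\psi_i$, the leading term $\psi_0\,=\,f_0g_0$ with the unit axiom, and the commutator condition $\widetilde f\star\widetilde g-\widetilde g\star\widetilde f\,=\,\sqrt{-1}\{f_0,g_0\}h+\beta h^2$ — all of which hold for a product of two quantizations because the bracket of a product symplectic manifold is the sum of the component brackets. That ${\mathcal U}$ is Zariski open and dense in $T^*{\mathcal Q}(V,d)$ follows from \eqref{e-2} and the identification $\widetilde{\mathcal U}'\,=\,\mathrm{Sym}^d(X)^0\times({\mathbb C}{\mathbb P}^{r-1})^d$ together with the openness of each of ${\mathcal W}$ and $T^*{\mathbb C}^{d(r-1)}$ inside the respective cotangent bundles.

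I expect the main obstacle to be the descent step: verifying carefully that the $\mathfrak{S}_d$-equivariant star-product on ${\mathcal K}_X^{\,d}$ genuinely descends to a \emph{well-defined associative} star-product on the quotient ${\mathcal W}$, i.e. that $\mathfrak{S}_d$-invariant formal functions on ${\mathcal K}_X^{\,d}$ form a subalgebra under $\star$ and that this subalgebra is identified with ${\mathcal A}({\mathcal W})$ via pullback along the quotient map — which uses that the action is free on this locus and that each bidifferential operator $\psi_i$ is built functorially and hence commutes with the permutation action — and then confirming that the commutator axiom with its precise $\sqrt{-1}\,h$-normalization survives both the descent and the product with the Moyal--Weyl factor.
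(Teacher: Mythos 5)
Your proposal follows essentially the same route as the paper's proof: quantize $({\mathcal K}_X,\theta_X)$ via the projective structure and Lemma \ref{lem1}, take the $d$-fold product, descend through the $S_d$-action to $\mathrm{Sym}^d({\mathcal K}_X)$ and restrict to ${\mathcal W}$, then combine with the Moyal--Weyl quantization of $T^*{\mathbb C}^{d(r-1)}$ under the product decomposition \eqref{cu}. The descent step you flag as the main obstacle is exactly the step the paper invokes (invariance of the star-product under the permutation action), so your argument is correct and aligned with the paper's.
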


\begin{proof}
Let $P$ be a projective structure on $X$.
Consider the Liouville symplectic form $\theta_X$ on ${\mathcal K}_X$ (see \eqref{e6}). Using $P$ we
constructed a deformation quantization of $({\mathcal K}_X,\,\theta_X)$. Now we shall construct a
deformation quantization of $\mathcal W$ in \eqref{cw} equipped with the restriction of the Liouville
symplectic form; note that since $\mathcal W$ is an open subset of $T^*\text{Sym}^d(X)^0$, the 
Liouville symplectic form on $T^*\text{Sym}^d(X)^0$ restricts to a symplectic form on
$\mathcal W$.

For $1\, \leq\, i\, \leq\, d$, let $p_i\, :\, ({\mathcal K}_X)^d\, \longrightarrow\,{\mathcal K}_X$
be the projection to the $i$--factor in the Cartesian product. The pull-back
$\sum_{i=1}^d p^*_i\theta_X$ is evidently a holomorphic symplectic form on
${\mathcal K}_X$. A deformation quantization of $({\mathcal K}_X,\,\theta_X)$ produces
a deformation quantization of $(({\mathcal K}_X)^d,\, \sum_{i=1}^d p^*_i\theta_X)$. In particular,
the deformation quantization of $({\mathcal K}_X,\,\theta_X)$ constructed using the
projective structure $P$ on $X$ gives a deformation
quantization of $(({\mathcal K}_X)^d,\, \sum_{i=1}^d p^*_i\theta_X)$.

Let $S_d$ denote the group of permutations of $\{1,\, \cdots,\, d\}$. This group $S_d$ acts on 
$({\mathcal K}_X)^d$ by permuting the factors in the Cartesian product. The corresponding 
quotient space is $\text{Sym}^d({\mathcal K}_X)$. The symplectic form $\sum_{i=1}^d 
p^*_i\theta_X$ on $({\mathcal K}_X)^d$ is evidently preserved by the action of $P^d$. Moreover,
the above deformation quantization of $(({\mathcal K}_X)^d,\, \sum_{i=1}^d p^*_i\theta_X)$ is
also preserved by the action of $P^d$. Therefore, we get a holomorphic symplectic form
$\Theta_d$ on the smooth locus $\text{Sym}^d({\mathcal K}_X)_0\, \subset\,
\text{Sym}^d({\mathcal K}_X)$ given by $\sum_{i=1}^d p^*_i\theta_X$, and we also obtain a
deformation quantization of $(\text{Sym}^d({\mathcal K}_X)_0,\, \Theta_d)$ given by the above
deformation quantization of $(({\mathcal K}_X)^d,\, \sum_{i=1}^d p^*_i\theta_X)$.

The open subset ${\mathcal W}\, \subset\, \text{Sym}^d({\mathcal K}_X)$ in \eqref{cw} is
clearly contained in $\text{Sym}^d({\mathcal K}_X)_0$. Therefore, the above
deformation quantization of $(\text{Sym}^d({\mathcal K}_X)_0,\, \Theta_d)$ produces a
deformation quantization of $({\mathcal W},\, \Theta_d)$ (the restriction of
$\Theta_d$ to ${\mathcal W}$ is also denoted by $\Theta_d$).

Let $\Theta_0$ denote the Liouville symplectic form on $T^*{\mathbb C}^{d(r-1)}$. Since
there is a natural holomorphic symplectomorphism between the holomorphic symplectic
manifold $(T^*{\mathbb C}^{d(r-1)}, \, \Theta_0)$ and ${\mathbb C}^{2d(r-1)}$ equipped with
the standard symplectic form, we have the Moyal--Weyl deformation quantization of
$(T^*{\mathbb C}^{d(r-1)}, \, \Theta_0)$.

The deformation quantizations of $(T^*{\mathbb C}^{d(r-1)}, \, \Theta_0)$ and $({\mathcal 
W},\, \Theta_d)$ together produce a deformation quantizations of
$({\mathcal W}\times T^*{\mathbb C}^{d(r-1)}, \, \Theta_d\oplus \Theta_0)$. The
identification between ${\mathcal W}\times T^*{\mathbb C}^{d(r-1)}$ and $\mathcal U$
(see \eqref{cu}) takes the symplectic form $\Theta_d\oplus \Theta_0$ on
${\mathcal W}\times T^*{\mathbb C}^{d(r-1)}$ to the symplectic form 
$\theta$ on $\mathcal U$. This completes the proof.
\end{proof}

\begin{remark}\label{rem}
For any holomorphic line bundle $L$ on $X$, the quot scheme ${\mathcal Q}(L^{\oplus r}, d)$
is canonically identified with ${\mathcal Q}({\mathcal O}^{\oplus r}_X, d)$. Therefore,
from Theorem \ref{thm1} we conclude that a projective structure on $X$ produces a
deformation quantization of a nonempty Zariski open subset of ${\mathcal Q}(L^{\oplus r}, d)$.
\end{remark} 

\section*{Acknowledgements}

The author is partially supported by the J. C. Bose Fellowship JBR/2023/000003.


\end{document}